\DeclareMathOperator{\rank}{rank}
\DeclareMathOperator{\poly}{poly}
\DeclareMathOperator{\MAXIS}{MAX-IS}
\DeclareMathOperator{\BUnit}{\mathbf 1}
\DeclareMathOperator{\Coast}{\mathbf{C}}
\DeclareMathOperator{\ZZ}{\mathbb{Z}}
\DeclareMathOperator{\QQ}{\mathbb{Q}}
\DeclareMathOperator{\RR}{\mathbb{R}}
\newcommand*{\intint}[2][1]{#1\!:\!#2}
\begin{document}

\title{An FPTAS for the $\Delta$-modular Multidimensional Knapsack Problem\thanks{The article was prepared under financial support of Russian Science Foundation grant No 21-11-00194.} }
%
%
\author{D.~V.~Gribanov \orcidID{0000-0002-4005-9483}}
%
%
\institute{National Research University Higher School of Economics, 25/12 Bolshaja Pecherskaja Ulitsa, Nizhny Novgorod, 603155, Russian Federation \email{dimitry.gribanov@gmail.com} }
\maketitle              
\begin{abstract}
It is known that there is no EPTAS for the $m$-dimensional knapsack problem unless $W[1] = FPT$. It is true already for the case, when $m = 2$. But, an FPTAS still can exist for some other particular cases of the problem.

In this note, we show that the $m$-dimensional knapsack problem with a $\Delta$-modular constraints matrix admits an FPTAS, whose complexity bound depends on $\Delta$ linearly. 
More precisely, the proposed algorithm arithmetical complexity is $O(n \cdot (1/\varepsilon)^{m+3} \cdot \Delta)$, for $m$ being fixed. Our algorithm is actually a generalisation of the classical FPTAS for the $1$-dimensional case.

Strictly speaking, the considered problem can be solved by an exact polynomial-time algorithm, when $m$ is fixed and $\Delta$ grows as a polynomial on $n$. This fact can be observed combining results of the papers \cite{STEINITZILP,FPT18,ProximityUseSparsity}. We give a slightly more accurate analysis to present an exact algorithm with the complexity bound
$
 O(n \cdot \Delta^{m + 1})
$, for $m$ being fixed. Note that the last bound is non-linear by $\Delta$ with respect to the given FPTAS.

The goal of the paper is only to prove the existence of the described FPTAS, and a more accurate analysis can give better constants in exponents. Moreover, we are not worry to much about memory usage. 
\end{abstract}

\keywords{Multidimensional knapsack problem  \and $\Delta$-modular integer linear programming \and FPTAS \and $\Delta$-modular matrix \and Approximation algorithm.}

\section{Introduction}

\subsection{Basic Definitions And Notations}

Let $A \in \mathbb{Z}^{m \times n}$ be an integer matrix. We denote by $A_{ij}$ the $ij$-th element of the matrix, by $A_{i*}$ its $i$-th row, and by $A_{*j}$ its $j$-th column. The set of integer values from $i$ to $j$, is denoted by $\intint[i]j=\left\{i, i+1, \ldots, j\right\}$. Additionally, for subsets $I \subseteq \{1,\dots,m\}$ and $J \subseteq \{1,\dots,n\}$, the symbols $A_{IJ}$ and $A[I,J]$ denote the sub-matrix of $A$, which is generated by all the rows with indices in $I$ and all the columns with indices in $J$. If $I$ or $J$ are replaced by $*$, then all the rows or columns are selected, respectively. Sometimes, we simply write $A_{I}$ instead of $A_{I*}$ and $A_{J}$ instead of $A_{*J}$, if this does not lead to confusion.

The maximum absolute value of entries in a matrix $A$ is denoted by $\|A\|_{\max} = \max_{i,j} |A_{i\,j}|$. The $l_p$-norm of a vector $x$ is denoted by $\|x\|_p$. 

\begin{definition}
For a matrix $A \in \ZZ^{m \times n}$, by $$
\Delta_k(A) = \max\{|\det A_{IJ}| \colon I \subseteq \intint m,\, J \subseteq \intint n,\, |I| = |J| = k\},
$$ we denote the maximum absolute value of determinants of all the $k \times k$ sub-matrices of $A$.  
Clearly, $\Delta_1(A) = \|A\|_{\max}$. Additionally, let $\Delta(A) = \Delta_{\rank(A)}(A)$. 
\end{definition}


\subsection{Description of Results and Related Work}

Let $A \in \ZZ_+^{m \times n}$, $b \in \ZZ_+^m$, $c \in \ZZ_+^n$ and $u \in \ZZ^{n}_+$. \emph{The bounded $m$-dimensional knapsack problem (shortly $m$-BKP)} can be formulated as follows: 
\begin{gather}
    c^\top x \to \max \notag\\
    \begin{cases}
    A x \leq b \\
    0 \leq x \leq u \\
    x \in \ZZ^n.
    \end{cases}\tag{$m$-BKP}\label{main_prob}
\end{gather}

It is well known that the $m$-BKP is $NP$-hard already for $m = 1$. However, it is also well known that the $1$-BKP admits an FPTAS. The historically first FPTAS for the $1$-BKP was given in the seminal work of O.~Ibarra and C.~Kim \cite{IK}. The results of \cite{IK} were improved in many ways, for example in the works \cite{Chan,ParamKP,ParamWKP,FastUKP,CeJin,KP1,KP2,LAW,CarMKP,MO,Rhee}. But, it was shown in \cite{NoFPTAS} (see \cite[p.~252]{TheMKP} for a simplified proof) that the $2$-BKP does not admit an FPTAS unless $P = NP$. Due to \cite{NoEPTAS}, the $2$-BKP does not admit an EPTAS unless $W[1] = FPT$. However, the $m$-BKP still admits a PTAS. To the best of our knowledge, the state of the art PTAS is given in \cite{BestPTAS}. The complexity bound proposed in \cite{BestPTAS} is $O(n^{\lceil \frac{m}{\varepsilon} \rceil - m})$. The perfect survey is given in the book \cite{TheMKP}.

Within the scope of the article, we are interested in studying $m$-BKP problems with a special restriction on sub-determinants of the constraints matrix $A$. More precisely, we assume that all rank-order minors of $A$ are bounded in an absolute value by $\Delta$. We will call this class of $m$-BKPs as \emph{$\Delta$-modular $m$-BKPs}. The main result of the paper states that the $\Delta$-modular $m$-BKP admits an FPTAS, whose complexity bound depends on $\Delta$ linearly, for any fixed $m$. 

\begin{theorem}\label{main_th}
The $\Delta$-modular \ref{main_prob} admits an FPTAS with the arithmetical complexity bound
$$
O(T_{LP} \cdot (1/\varepsilon)^{m+3} \cdot (2m)^{2m + 6} \cdot \Delta),
$$  where $T_{LP}$ is the linear programming complexity bound.
\end{theorem}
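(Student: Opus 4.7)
The natural strategy is to adapt the classical Ibarra--Kim profit-scaling FPTAS for $1$-BKP on top of the exact algorithm for the $\Delta$-modular $m$-BKP referenced in the introduction.

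\emph{Profit scaling.} First I would solve the LP relaxation of \ref{main_prob} in time $T_{LP}$, obtain $v^{LP} = c^\top x^{LP}$, and use the trivial bounds $v^{LP}/n \le OPT \le v^{LP}$ to set the scaling factor $K := \varepsilon\, v^{LP}/n$ together with rounded profits $\tilde c_j := K\,\lfloor c_j/K\rfloor$. The usual chain $c^\top \tilde x \ge \tilde c^\top \tilde x \ge \tilde c^\top x^{\ast} \ge c^\top x^{\ast} - nK \ge (1-\varepsilon)\,OPT$ then shows that an exact optimum $\tilde x$ of the $\tilde c$-problem is a $(1-\varepsilon)$-approximation of \ref{main_prob}. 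Crucially, after this rescaling the integers $\tilde c_j/K$ lie in $\{0,\ldots,\lceil n/\varepsilon\rceil\}$ and the total scaled profit is at most $O(n^2/\varepsilon)$.

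\emph{Exact solution of the scaled problem.} Next I would solve the scaled instance by a dynamic program whose state is indexed by the scaled profit level $p \in \{0,\ldots,O(n^2/\varepsilon)\}$ together with a suitable reduced representation of the constraint-slack vector. The exact algorithm referenced in the abstract runs in $O(n \cdot \Delta^{m+1})$ via a Steinitz-lemma DP combined with a Cook--Gerards--Schrijver--Seb\H{o}-style proximity bound; the key refinement here is that after profit scaling, the profit axis of the DP costs only $\poly(1/\varepsilon)$ rather than $O(\Delta)$, while the proximity theorem still restricts each slack coordinate to an $O((2m)^{O(m)}\Delta)$-wide window around $\lfloor x^{LP}\rfloor$. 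Bundling these estimates should yield a DP of total cost $O(n \cdot (1/\varepsilon)^{m+3} \cdot (2m)^{2m+6} \cdot \Delta)$, matching the bound claimed in the theorem, with the outer $T_{LP}$-factor absorbing the (possibly repeated) LP solves used to bootstrap the scaling.

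\emph{Main obstacle.} The chief difficulty is making the exact $\Delta$-modular solver on the scaled instance run in time that is \emph{linear} in $\Delta$, rather than the $\Delta^{O(m)}$ of the unscaled algorithm. This requires redesigning the Steinitz/proximity-based DP so that the scaled-profit coordinate absorbs exactly one factor of $\Delta$ from the state space, and requires the proximity constants for $\Delta$-modular $m$-BKPs to be made sharp enough to yield the stated $(2m)^{2m+6}$ prefactor. By contrast, the scaling step and the $(1-\varepsilon)$-approximation analysis are entirely standard, so essentially all of the technical work sits inside the tailored $\Delta$-modular exact DP.
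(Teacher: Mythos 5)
Your plan diverges from the paper at the decisive point, and the divergence is exactly where your argument has a hole you yourself flag as the ``main obstacle'' without resolving it. The paper does \emph{not} obtain Theorem \ref{main_th} by scaling profits and then running a proximity/Steinitz-based exact DP on the scaled instance. Its route is: (a) a greedy $1/(m+1)$-approximation $\Coast^{gr}$ from LP rounding; (b) an Ibarra--Kim-style split into heavy items ($c_i > \alpha\,\Coast^{gr}$) and light items, which forces $\|x_H\|_1 \leq (m+1)/\alpha = O(m^2/\varepsilon)$ for \emph{every} feasible solution; (c) a DP by scaled costs over the heavy items only, whose cost axis has merely $O((m+1)/\beta) = O(m^3/\varepsilon^2)$ levels (independent of $n$, unlike your $O(n^2/\varepsilon)$ profit axis), and whose state stores the \emph{set of reachable right-hand-side vectors} $y = A_H x_H$; and (d) greedy completion on the light items for each DP entry, losing only $(m+1)\alpha\,\Coast^{opt}$. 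The single source of the linear $\Delta$-dependence is Lemma \ref{DP_width_lm}/Corollary \ref{DP_width_cor}: for a $\Delta$-modular matrix, $|\{Ax \colon \|x\|_1 \leq \gamma\} \cap \ZZ^m| \leq 2^m \lceil 1+\gamma\rceil^m \Delta$, so the DP tables have size $O((m/\varepsilon)^m \Delta)$ because heavy solutions have small $\ell_1$-norm. No proximity bound and no Steinitz lemma appear anywhere in the FPTAS; those tools are used only for the exact algorithm of Theorem \ref{Delta_ILP_th}.

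Your proposal, by contrast, keeps all items in the DP and hopes that a proximity window of width $O((2m)^{O(m)}\Delta)$ per slack coordinate, combined with profit scaling, can be ``redesigned'' so that the state space absorbs only one factor of $\Delta$. As stated this fails: the window argument gives a state space of size $\Delta^m$ (one factor per constraint), and you offer no mechanism to collapse it to $\Delta$ --- which is precisely the counting lemma you are missing. Moreover your profit axis of size $O(n^2/\varepsilon)$ would introduce extra polynomial factors in $n$ beyond the claimed bound, and your scaling chain needs $nK \leq \varepsilon\,\Coast^{opt}$, which with $K = \varepsilon\, v^{LP}/n$ only holds up to a factor $m+1$ since $v^{LP} \leq (m+1)\Coast^{opt}$ (a minor, fixable slip). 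The genuine gap is the absence of the two ideas that make the theorem work: the heavy/light split that bounds the $\ell_1$-norm of the DP-relevant solutions by $O(m/\varepsilon)$ and keeps the number of cost levels independent of $n$, and the lemma that the integer image of an $\ell_1$-ball of radius $\gamma$ under a $\Delta$-modular matrix has at most $2^m\lceil 1+\gamma\rceil^m\Delta$ points. Without these (or a substitute), the claimed complexity $O(T_{LP} \cdot (1/\varepsilon)^{m+3} \cdot (2m)^{2m+6} \cdot \Delta)$ is not reached.
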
 Proof of the theorem is given in Section \ref{proof_sec}.

Due to the seminal work of N.~Megiddo \cite{MEG}, the linear program can be solved by a linear-time algorithm if $m$ is fixed. 
\begin{corollary}\label{main_cor}
For fixed $m$ the complexity bound of Theorem \ref{main_th} can be restated as 
$$
O(n \cdot (1/\varepsilon)^{m+3} \cdot \Delta).
$$
\end{corollary}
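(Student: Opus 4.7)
The plan is to read the corollary off directly from Theorem \ref{main_th}, by showing that, once $m$ is regarded as a constant, the two non-linear factors $T_{LP}$ and $(2m)^{2m+6}$ in the complexity bound of Theorem \ref{main_th} both collapse into the claimed form.

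Starting from the bound
\[
 O\bigl(T_{LP} \cdot (1/\varepsilon)^{m+3} \cdot (2m)^{2m+6} \cdot \Delta\bigr),
\]
the factor $(2m)^{2m+6}$ becomes an absolute constant the moment $m$ is frozen, so it can be absorbed into the $O(\cdot)$ notation with no further argument. The substantive step is therefore to produce the bound $T_{LP} = O(n)$.

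For this I would invoke Megiddo's algorithm \cite{MEG}, which solves a linear program whose number of variables is fixed in time $O(n)$, where $n$ is the number of inequalities. The LP relevant to the FPTAS of Theorem \ref{main_th} is the continuous relaxation of \ref{main_prob}, i.e.\ $\max\{c^\top x \colon Ax \le b,\; 0 \le x \le u\}$, in $n$ variables with only $m$ general constraints plus box bounds. Passing to the dual, one obtains $m$ non-trivial variables (one per row of $A$), while the $2n$ box constraints collapse into a separable convex objective of the form
\[
 \min\; b^\top y + \sum_{j=1}^{n} u_j \max\!\bigl(0,\; c_j - A_{*j}^\top y\bigr)\quad\text{subject to } y \ge 0.
\]
This is a convex piecewise-linear minimisation in the fixed dimension $m$ with $n$ linear pieces, which is precisely the shape handled by Megiddo's fixed-dimension technique in $O(n)$ time; a primal optimum is then reconstructed in $O(n)$ additional time via complementary slackness. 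Hence $T_{LP} = O(n)$ with a constant depending only on $m$, as required.

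Plugging this bound back into Theorem \ref{main_th} and absorbing the $m$-dependent constant $(2m)^{2m+6}$ yields $O\bigl(n \cdot (1/\varepsilon)^{m+3} \cdot \Delta\bigr)$, which is the claim. I do not expect any real obstacle: the corollary is a cosmetic specialisation of the theorem. The only point that deserves a sentence of verification is that the linear programs actually invoked inside the FPTAS of Theorem \ref{main_th} share the $m$-constraint structure of \ref{main_prob}, so that Megiddo's fixed-dimension guarantee is applicable; this is immediate from the formulation of the problem.
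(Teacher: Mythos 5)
Your proposal is correct and matches the paper's own route: the paper derives the corollary in one line by citing Megiddo \cite{MEG} for $T_{LP} = O(n)$ when $m$ is fixed and absorbing the constant $(2m)^{2m+6}$ into the $O(\cdot)$. Your dual reduction to a piecewise-linear convex minimisation in dimension $m$ is simply an expanded justification of that same citation, not a different argument.
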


We need to note that results of the papers \cite{STEINITZILP,FPT18,ProximityUseSparsity} can be combined to develop an exact polynomial-time algorithm for the considered $\Delta$-modular $m$-BKP problem, and even more, for any $\Delta$-modular ILP problem in standard form with a fixed number of constraints $m$. But, the resulting algorithm complexity contains a non-linear dependence on $\Delta$ in contrast with the developed FPTAS. The precise formulation will be given in the following Theorem \ref{Delta_ILP_th} and Corollary \ref{Delta_ILP_cor}. First, we need to make some definitions:
\begin{definition}
Let $A \in \ZZ^{m \times n}$, $b \in \ZZ^m$, $c \in \ZZ^n$, $u \in \ZZ^n_{+}$, $\rank(A) = m$ and $\Delta = \Delta(A)$. \emph{The bounded $\Delta$-modular ILP in standard form} (shortly $m$-BILP) can be formulated as follows:
\begin{gather}
c^\top x \to \max \notag \\    
\begin{cases}
        A x = b \\
        0 \leq x \leq u \\
        x \in \ZZ^{n}.
\end{cases} \tag{$m$-BILP}\label{ILP_standard}
\end{gather}

The main difference between the problems \ref{ILP_standard} and \ref{main_prob} is that the input of the problem \ref{ILP_standard} can contain negative numbers. The inequalities of the problem \ref{main_prob} can be turned to equalities using slack variables.
\end{definition}

\begin{definition}\label{H_def}
Consider the problem \ref{ILP_standard}. Let $z^*$ be an optimal solution of \ref{ILP_standard} and $x^*$ be an optimal vertex-solution of the LP relaxation of \ref{ILP_standard}. \emph{The $l_1$-proximity bound $H$} of the problem \ref{ILP_standard} is defined by the formula
$$
H = \max_{x^*}\min_{z^*} \|x^* - z^*\|_1.
$$

It is proven in \cite{STEINITZILP} that
\begin{equation}\label{Delta1_H_bound}
    H \leq m \cdot (2 m \cdot \Delta_1 + 1)^m,\quad\text{where $\Delta_1 = \Delta_1(A) = \|A\|_{\max}$.}
\end{equation}

It was noted in \cite[formula (4)]{ProximityUseSparsity} that this proximity bound \eqref{Delta1_H_bound} of the paper \cite{STEINITZILP} can be restated to work with the parameter $\Delta(A)$ instead of $\Delta_1(A)$. More precisely, there exists an optimal solution $z^*$ of the \ref{ILP_standard} problem such that \begin{equation*}\label{Delta_H_bound}
    H \leq m\cdot(2m+1)^m \cdot \Delta,\quad\text{where $\Delta = \Delta(A)$.}
\end{equation*}
\end{definition}

\begin{theorem}\label{Delta_ILP_th}
The \ref{ILP_standard} problem can be solved by an algorithm with the following arithmetical complexity:
\begin{equation*}
    n \cdot {O(H + m)}^{m+1} \cdot \log^2(H) \cdot \Delta + T_{LP}.
\end{equation*}

The previous complexity bound can be slightly improved in terms of $H$:
\begin{gather*}
    n \cdot O(\log m)^{m^2} \cdot {(H + m)}^{m}\cdot \Delta + T_{LP}.
\end{gather*}

\end{theorem}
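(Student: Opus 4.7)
The plan is to combine three ingredients that appear, in different guises, in \cite{STEINITZILP,FPT18,ProximityUseSparsity}. First, I would solve the LP relaxation of the \ref{ILP_standard} problem in time $T_{LP}$, obtaining an optimal vertex solution $x^*$. Since $x^*$ is a basic feasible solution and $\rank(A) = m$, at most $m$ of its coordinates are fractional, so $\|x^* - \lfloor x^* \rfloor\|_1 < m$. The proximity bound recalled in Definition \ref{H_def} guarantees an optimal integer solution $z^*$ with $\|x^* - z^*\|_1 \leq H$; consequently, setting $y = z^* - \lfloor x^* \rfloor$ one obtains $\|y\|_1 \leq H + m$ and the equivalent subproblem
\[
c^\top y \to \max, \quad A y = b - A \lfloor x^* \rfloor, \quad -\lfloor x^* \rfloor \leq y \leq u - \lfloor x^* \rfloor, \quad y \in \mathbb{Z}^n.
\]

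Second, I would feed this reduced instance into the Steinitz-lemma based dynamic programming of \cite{FPT18}. That algorithm maintains partial sums of the form $\sum_{j} y_j A_{*j}$ layer by layer; by the Steinitz lemma, along an appropriate order of summation all reachable intermediate vectors lie in an $\ell_\infty$-ball whose radius is controlled by $m$ and by the column norms of $A$. Using the crude bound $\|A\|_{\max} = \Delta_1(A)$ would yield a running time roughly $n \cdot (H + m) \cdot O(m \Delta_1)^m$, which has the wrong dependence (on $\Delta_1$ rather than on $\Delta$).

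Third, and this is the only non-routine step, I would invoke the sparsity / column-substitution argument of \cite{ProximityUseSparsity}. The key observation there is that the proximity vector $y$ may be chosen with small support, and that after a linear-algebraic preprocessing one may replace the relevant columns of $A$ by columns whose $\ell_\infty$-norm is bounded in terms of rank-order subdeterminants, i.e.\ by $O(\Delta)$ rather than $O(\Delta_1)$. Consequently the Steinitz walk inside the DP can be confined to a box of $\ell_\infty$-radius $O(m \Delta)$, which yields the factor $\Delta$ in the final bound. A routine accounting of the layer transitions (the $\log^2(H)$ factor coming from the range-tree style data structure of \cite{FPT18} used to detect the per-layer optimum) then produces the bound $n \cdot O(H+m)^{m+1} \cdot \log^2(H) \cdot \Delta + T_{LP}$; the sharper variant $n \cdot O(\log m)^{m^2} \cdot (H+m)^m \cdot \Delta + T_{LP}$ follows from the boolean-matrix-product acceleration of those transitions, also worked out in \cite{FPT18}.

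The main obstacle I foresee is the third step: one has to extract from \cite{ProximityUseSparsity} the exact lemma that confines the Steinitz walk to an $O(\Delta)$-box rather than an $O(\Delta_1)$-box, and then check that the column substitution is compatible with the box constraints $0 \leq x \leq u$ inherited from the \ref{ILP_standard} formulation. If this bookkeeping goes through cleanly, both complexity estimates stated in Theorem \ref{Delta_ILP_th} follow with no further combinatorial input, and the subsequent plugging-in of the proximity bound from Definition \ref{H_def} will be purely mechanical.
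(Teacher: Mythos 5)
Your steps~1 and~2 (solve the LP, shift by $\lfloor x^*\rfloor$, set up a layer-by-layer dynamic program over partial sums with $\|y\|_1\le H+m$) match the paper. The gap is in your third step, which is the one that is supposed to produce the factor $\Delta$: confining the Steinitz walk to an $\ell_\infty$-box of radius $O(m\Delta)$ cannot give the claimed bounds, because the DP state space would then contain all lattice points of that box, i.e.\ $O(m\Delta)^m$ of them, and the resulting complexity would depend on $\Delta^m$, not linearly on $\Delta$. The paper does not use a Steinitz-box argument at all (and it does not use any column substitution from \cite{ProximityUseSparsity}; that reference only supplies the proximity bound $H\le m(2m+1)^m\Delta$, which is already folded into the parameter $H$ of the statement). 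Instead, the DP states are the \emph{reachable} integer vectors $h=Ax$ with $\|x\|_1\le l$, and the crucial ingredient is the counting/enumeration result (Lemma~\ref{DP_width_lm} and Corollary~\ref{DP_width_cor}): the set $M=\{Ax\colon \|x\|_1\le\gamma\}$ contains at most $2^m\lceil 1+\gamma\rceil^m\Delta$ integer points, proved by factoring $A=B(I\;U)$ through a maximal-determinant $m\times m$ submatrix $B$ and covering $M$ by $2^m$ half-open parallelepipeds spanned by $\lceil 1+\gamma\Delta/\delta\rceil B$, each containing exactly $|\det|$ lattice points. That single-$\Delta$ count is what makes both bounds of the theorem possible; without it your plan stalls at $\Delta^m$.

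Two further attributions in your sketch are off, and they hide the remaining work. The $\log^2(H)$ factor in the first bound comes from the binarization trick of \cite{STEINITZILP} applied to the bounded variables (replacing each in-degree-$(H+m+1)$ transition by $O(\log^2(H+m))$ binary choices), not from a range-tree data structure. The second bound is not obtained by any boolean-matrix-product acceleration: the paper first enumerates $M$ explicitly (this is where the $O(\log m)^{m^2}$ factor enters, via the polynomial-time approximation of the maximal-determinant submatrix from \cite{SUBDET_APPROX}), then, for each level $k$, decomposes the auxiliary graph on $M$ with arcs $h\to h+A_k$ into disjoint paths and computes all longest-path values along each path with a sliding-window maximum queue in amortized $O(1)$ per vertex; this is what removes the extra factor $(H+m)$ relative to the first bound. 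You would need to supply both of these mechanisms (or equivalents) to actually reach the stated complexities.
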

The proof can be found in Section \ref{proof_Delta_ILP_th}.

\begin{remark}\label{hash_table_rm}
The algorithms described in the proof of Theorem \ref{Delta_ILP_th} are using hash tables with linear expected constructions time and constant worst-case lookup time to store information dynamic tables. An example of a such hash table can be found in the book \cite{CORMEN}. So, strictly speaking, algorithms of Theorem \ref{Delta_ILP_th} are randomized.

Randomization can be removed by using any balanced search-tree, for example, $RB$-tree \cite{CORMEN}. It will lead to additional logarithmic term in the complexity bound.
\end{remark}

Applying the proximity bounds \eqref{Delta_H_bound} and \eqref{Delta1_H_bound} to the previous Theorem \ref{Delta_ILP_th}, we can obtain estimates that are independent of $H$. For example, we obtain the following corollary:
\begin{corollary}\label{Delta_ILP_cor}
The problem \ref{ILP_standard} can be solved by an algorithm with the following arithmetical complexity bound:

\begin{gather*}
     n \cdot O(\log m)^{m^2} \cdot O(m)^{m^2+m} \cdot \Delta^{m+1} + T_{LP}\quad\text{ and }\\
     n \cdot \Delta^{m + 1}, \quad\text{for $m$ being fixed}.
\end{gather*}

\end{corollary}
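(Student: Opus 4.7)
The plan is to derive both complexity bounds of the corollary by substituting the proximity bound from Definition~\ref{H_def} into the second (and sharper in $H$) complexity bound of Theorem~\ref{Delta_ILP_th}. Recall that theorem provides an algorithm running in time
\begin{equation*}
n \cdot O(\log m)^{m^2} \cdot (H+m)^{m} \cdot \Delta + T_{LP},
\end{equation*}
and the proximity estimate gives $H \leq m \cdot (2m+1)^{m} \cdot \Delta$. So the only real task is to expand $(H+m)^{m}$ under this bound and to regroup everything cleanly.

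First, I would estimate $(H+m)^{m}$ by noting that $H + m \leq m \cdot (2m+1)^{m} \cdot \Delta + m \leq 2m \cdot (2m+1)^{m} \cdot \Delta$, using $\Delta \geq 1$. Raising to the $m$-th power yields
\begin{equation*}
(H+m)^{m} \leq (2m)^{m} \cdot (2m+1)^{m^{2}} \cdot \Delta^{m} = O(m)^{m^{2}+m} \cdot \Delta^{m}.
\end{equation*}
Plugging this into Theorem~\ref{Delta_ILP_th} produces exactly the first stated bound
\begin{equation*}
n \cdot O(\log m)^{m^{2}} \cdot O(m)^{m^{2}+m} \cdot \Delta^{m+1} + T_{LP}.
\end{equation*}

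For the second bound, I would fix $m$ as a constant so that both $O(\log m)^{m^{2}}$ and $O(m)^{m^{2}+m}$ collapse into constants. It remains only to deal with the $T_{LP}$ term; by Megiddo's linear-time algorithm for LP in fixed dimension (cited in the excerpt and already used for Corollary~\ref{main_cor}), $T_{LP} = O(n)$ when $m$ is fixed. Since $\Delta \geq 1$, this $O(n)$ summand is absorbed into $n \cdot \Delta^{m+1}$, giving the claimed $n \cdot \Delta^{m+1}$ bound.

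There is essentially no obstacle here; the corollary is purely a bookkeeping consequence of the already-established proximity inequality and Theorem~\ref{Delta_ILP_th}. The only minor care needed is to ensure that the $+m$ and $+1$ offsets inside the polynomial and exponential expressions are absorbed into the $O(\cdot)$ notation consistently, and that the hidden constants in $O(\log m)^{m^{2}}$ and $O(m)^{m^{2}+m}$ swallow the factor $2^{m}$ coming from the $(H+m)^{m}$ expansion.
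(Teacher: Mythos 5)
Your proposal is correct and matches the paper's (implicit) argument: the corollary is stated there as a direct substitution of the proximity bound $H \leq m\cdot(2m+1)^m\cdot\Delta$ into the second complexity bound of Theorem~\ref{Delta_ILP_th}, exactly as you do, with the fixed-$m$ statement absorbing the constants and the $T_{LP}$ term. Your explicit expansion $(H+m)^m \leq (2m)^m(2m+1)^{m^2}\Delta^m = O(m)^{m^2+m}\Delta^m$ and the appeal to Megiddo for $T_{LP}=O(n)$ are precisely the bookkeeping the paper leaves to the reader.
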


\begin{remark}\label{BKNAP_rm}
Taking $m = 1$ in the previous corollary we obtain the
$
O(n \cdot \Delta^2)
$ complexity bound for the classical bounded knapsack problem, where $\Delta$ is the maximal absolute value of item weights. Our bound is better than the previous state of the art bounds $O(n^2 \cdot \Delta^2)$ and $O(n \cdot \Delta^2 \cdot \log^2 \Delta)$ due to \cite{STEINITZILP}.
\end{remark}

Better complexity bound for searching of an exact solution can be achieved for the unbounded version of the \ref{ILP_standard} problem. More precisely, for this case, the paper \cite{CONVILP} gives the complexity bound  
$$
O(\sqrt{m} \Delta)^{2m} + T_{LP}.
$$ We note that the original complexity bound from the work \cite{CONVILP} is stated with respect to the parameter $\Delta_1(A) = \|A\|_{\infty}$ instead of $\Delta(A)$ (see the next Remark \ref{Delta1_complexity_rm}), but, due to Lemma 1 of \cite{FPT18}, we can assume that $\Delta_1(A) \leq \Delta(A)$. 

\begin{remark}\label{Delta1_complexity_rm}
Another interesting parameter of the considered problems \ref{main_prob} and \ref{ILP_standard} is $\Delta_1(A) = \|A\|_{\max}$. Let us denote $\Delta_1 = \Delta_1(A)$. The first exact quasipolynomial-time algorithm for \ref{ILP_standard} was constructed in the seminal work of C.~H.~Papadimitriou \cite{PAPA}. The result of \cite{PAPA} was recently improved in \cite{STEINITZILP}, where it was shown that the \ref{ILP_standard} can be solved exactly by an algorithm with the arithmetical complexity
\begin{gather}
    n \cdot O(m)^{(m+1)^2} \cdot \Delta_1^{m(m+1)} \cdot \log^2(m \Delta_1) + T_{LP}\quad\text{ and }\notag\\
    n \cdot \Delta_1^{m(m+1)} \cdot \log^2(\Delta_1), \quad\text{ for $m$ being fixed}.\label{Delta1_ILP_bound}
\end{gather}
Due to the results of \cite{CONVILP}, the unbounded version of the problem can be solved by an algorithm with the arithmetical complexity
\begin{equation}\label{Delta1_UILP_bound}
O(\sqrt{m} \Delta_1)^{2m} + T_{LP}.    
\end{equation}

The results of our note can be easily restated to work with the $\Delta_1$ parameter. Using the inequality \eqref{Delta1_H_bound}, the arithmetical complexity bound of Corollary \ref{Delta_ILP_cor} becomes
\begin{gather*}
    n \cdot O(\log m)^{m^2} \cdot O(m)^{m^2 + m} \cdot \Delta_1^{m(m+1)} + T_{LP}\quad\text{ and }\notag\\
    n \cdot \Delta_1^{m(m+1)}, \quad\text{ for $m$ being fixed},
\end{gather*}
which is slightly better, than the bound \eqref{Delta1_ILP_bound} of \cite{STEINITZILP}. 

Additionally, Corollary \ref{Delta_ILP_cor} gives currently best bound $O(n \cdot \Delta^2_1)$ for the classical $1$-dimensional bounded knapsack problem, see Remark \ref{BKNAP_rm}.

The analogue result can be stated for our FPTAS. Definitely, for $\gamma > 0$ and $M = \{ y = A x \colon x \in \RR_+^n,\, \|x\|_1 \leq \gamma\}$ we trivially have $|M \cap \ZZ^m| \leq (\gamma \Delta_1)^m$. Applying the algorithm from Section \ref{proof_sec} to this analogue of Corollary \ref{DP_width_cor}, it gives an algorithm with the arithmetical complexity
\begin{gather*}
    O(T_{LP} \cdot (1/\varepsilon)^{m+3} \cdot m^{2m + 6} \cdot (2\Delta_1)^{m}) \quad\text{ and }\notag\\
    O(n \cdot (1/\varepsilon)^{m+3} \cdot \Delta_1^{m}), \quad \text{ for $m$ being fixed}.\label{Delta1_FPTAS_bound}
\end{gather*}

For sufficiently large $\varepsilon$ the last bounds give a better dependence on $m$ and $\Delta_1$, than bounds \eqref{Delta1_ILP_bound} from \cite{STEINITZILP}. 

\end{remark}

\begin{remark}[Why $\Delta$-modular ILPs could be interesting?]
It is well known that the Maximal Independent Set (shortly $\MAXIS$) problem on a simple graph $G = (V,E)$ can be formulated by the ILP
\begin{gather}
    \BUnit^\top x \to \max \notag\\
    \begin{cases}
            A(G)\, x \leq \BUnit\\
            x \in \{0,1\}^{|V|},
    \end{cases}\tag{$\MAXIS$}\label{IS_ILP}
\end{gather}
\end{remark} where $A(G) \in \{0,1\}^{|E| \times |V|}$ is the edge-vertex incidence matrix of $G$. Due to the seminal work \cite{MinorsGraphs} $$\Delta(A(G)) = 2^{\nu(G)},$$ where $\nu(G)$ is the odd-cycle packing number of $G$. Hence, the existence of a polynomial-time algorithm for $\Delta$-modular ILPs will lead to the existence of a polynomial-time algorithm for the \ref{IS_ILP} problem for graphs with a fixed $\nu(G)$ value. Recently, it was shown in \cite{BIMODULAR_STRONG} that $2$-modular ILPs admit a strongly polynomial-time algorithm, and consequently, the $\MAXIS \in P$ for graphs with one independent odd-cycle. But, existence of a polynomial-time algorithms even for the $3$-modular or $4$-modular ILP problems is an interesting open question, as well as existence of a polynomial-time algorithm for the \ref{IS_ILP} problem on graphs with $\nu(G) = 2$. Finally, due to \cite{AZ}, if $\Delta(\bar A)$ is fixed, where $\bar A = \binom{\BUnit^\top}{A(G)}$ is the extended matrix of the ILP \ref{IS_ILP}, then the problem can be solved by a polynomial time algorithm. The shorter proof could be found in \cite{GRIBM17,GRIBM18}, as well as analogue results for vertex and edge Maximal Dominating Set problems. For recent progress on the \ref{IS_ILP} problem with respect to the $\nu(G)$ parameter see the papers \cite{BOCK14,STABLE_SET_GENUS,KNOWN_ODD_CYCLES}. 

Additionally, we note that, due to \cite{BOCK14}, there are no polynomial-time algorithms for the \ref{IS_ILP} problem on graphs with $\nu(G) = \Omega(\log n)$ unless the ETH (the Exponential Time Hypothesis) is false. Consequently, with the same assumption, there are no algorithms for the $\Delta$-modular ILP problem with the complexity bound $\poly(s) \cdot \Delta^{O(1)}$, where $s$ is an input size. Despite the fact that algorithms with complexities $\poly(s) \cdot \Delta^{f(\Delta)}$ or $s^{f(\Delta)}$ may still exist, it is interesting to consider existence of algorithms with a polynomial dependence on $\Delta$ in their complexities for some partial cases of the $\Delta$-modular ILP problem. It is exactly what we do in the paper while fixing the number of constraints in ILP formulations of the problems \ref{main_prob} and \ref{ILP_standard}.

Due to the Hadamard's inequality, the existence of an ILP algorithm, whose complexity depends on $\Delta$ linearly, can give sufficiently better complexity bounds in terms of $\Delta_1$, than the bounds of Remark \ref{Delta1_complexity_rm}.




\begin{remark}[Some notes about lower bounds for fixed $m$.]
Unfortunately, there are not many results about lower complexity bounds for the problem \ref{ILP_standard} with fixed $m$. But, we can try to adopt some bounds based on the $\Delta_1$ parameter to our case. For example, the existence of an algorithm with the complexity bound 
$$
2^{o(m)} \cdot 2^{o(\log_2 \Delta)} \cdot \poly(s)
$$ will contradict to the ETH. It is a straightforward adaptation of \cite[Theorem 3]{FOMIN}.

The Theorem 13 of \cite{CONVILP} states that for any $\delta> 0$ there is no algorithm with the arithmetical complexity bound
$$
f(m) \cdot (n^{2-\delta} + \Delta_1^{2m - \delta}),
$$ unless there exists a truly sub-quadratic algorithm for the $(min, +)$-convolution. Using Hadamard's inequality, it adopts to
$$
f(m) \cdot (n^{2-\delta} + \Delta^{2 - \delta/m}).
$$

The best known bound in terms of $m$ is given in \cite[Corollary 2]{TightComplexityLB}. More precisely, the existence of an algorithm with the complexity bound 
$$
2^{o(m \log m)} \cdot \Delta_1^{f(m)} \cdot \poly(s)
$$ will contradict to the ETH. But, we does not know how to adopt it for $\Delta$-modular case at the moment.

Unfortunately, all mentioned results are originally constructed for the version of \ref{ILP_standard} with unbounded variables and it is the main reason, why their bounds are probably weak with respect to the dependence on the $\Delta$ parameter.  And it would be very interesting to construct a lower bound of the form 
$$
f(m) \cdot \Delta^{\Omega(m)} \cdot \poly(s) 
$$ for the \ref{ILP_standard} problem. Additionally, at the moment we does not know any FPTAS lower bounds for the \ref{main_prob} problem. These questions are good directions for future research.

\end{remark}

\section{Proof of the Theorem \ref{main_th}}\label{proof_sec}

\subsection{Greedy Algorithm}

The $1/(m+1)$-approximate algorithm for the \ref{main_prob} is presented in \cite{BestPTAS} (see also \cite[p.~252]{TheMKP}) for the case $u = \BUnit$. This algorithm can be easily modified to work with a generic upper bounds vector $u$.

\begin{algorithm}[H]
    \caption{The greedy algorithm}
    \begin{algorithmic}[1]
        \REQUIRE an instance of the \ref{main_prob} problem;
        \ENSURE return $1/(m+1)$-approximate solution of the \ref{main_prob};
        \STATE compute an optimal solution $x^{LP}$ of the LP relaxation of the \ref{main_prob};
        \STATE $y := \lfloor x^{LP} \rfloor$ --- a rounded integer solution;
        \STATE $F := \{ i \colon x^{LP}_i \notin \ZZ \}$ --- variables with fractional values;
        \RETURN $\Coast_{gr} := \max\{c^\top y, \max_{i\in F}\{c_i\} \}$;
    \end{algorithmic}
\end{algorithm}

Since the vector $x^{LP}$ can have at most $m$ fractional coordinates and  
\begin{equation}\label{greedy_ineq}
    c^\top y + \sum_{i \in F} c_i = c^\top \lceil x^{LP} \rceil \geq c^\top x^{LP} \geq \Coast_{opt},
\end{equation}
 we have $\Coast_{gr} \geq \frac{1}{m+1} \Coast_{opt}$.
 
\subsection{Dynamic Programming by Costs}

The dynamic programming by costs is one of the main tools in many FPTASes for the $1$-BKP. Unfortunately, it probably can not be generalized to work with $m$-BKPs for greater values of $m$. However, such generalizations can exist for some partial cases such as the $\Delta$-modular \ref{main_prob}.

Suppose that we want to solve the \ref{main_prob}, and it is additionally known that $\|x\|_1 \leq \gamma$, for any feasible solution $x$ and some $\gamma > 0$. Then, to develop a dynamic program it is natural to consider only integer points $x$ that satisfy to $\|x\|_1 \leq \gamma$. The following simple lemma and corollary help to define such a program. 

\begin{lemma}\label{DP_width_lm}
Let $A \in \ZZ^{m \times n}$ and $B \in \ZZ^{m \times m}$ be the non-degenerate sub-matrix of $A$. Let additionally $\gamma \in \RR_{>0}$, $\Delta = \Delta(A)$, $\delta = |\det B|$ and 
$$
M = \{y = A x \colon x \in \RR^n,\, \|x\|_1 \leq \gamma \},
$$
$$
\text{then}\quad |M \cap \ZZ^m| \leq 2^m \cdot \lceil 1 + \gamma \cdot \frac{\Delta}{\delta} \rceil^m \cdot \Delta.
$$

Points of $M \cap \ZZ^m$ can be enumerated by an algorithm with the arithmetical complexity bound:
$$
O(m^2 \cdot 2^m \cdot D ),
$$ where $D = \Delta \cdot \left( \gamma \cdot \frac{\Delta}{\delta} \right)^m$.
\end{lemma}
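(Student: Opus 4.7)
The plan is to re-coordinatize $M$ through $B$ and then count integer points in an axis-aligned box. Since $B$ is invertible, every $y = Ax \in M$ can be uniquely written as $y = Bz$ with
$$z = B^{-1} A x = \sum_{j=1}^{n} x_j \cdot (B^{-1} A_{*j}).$$
To bound $\|z\|_\infty$, I would apply Cramer's rule coordinate-wise: the $i$-th entry of $B^{-1} A_{*j}$ equals $\det(B^{(i,j)})/\det(B)$, where $B^{(i,j)}$ denotes $B$ with its $i$-th column replaced by $A_{*j}$. Since $B^{(i,j)}$ is an $m \times m$ submatrix of $A$ (whose determinant is $0$ in the degenerate case of a repeated column), we have $|\det B^{(i,j)}| \le \Delta$, whence $\|B^{-1} A_{*j}\|_\infty \le \Delta/\delta$. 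The triangle inequality applied to the $\|x\|_1 \le \gamma$ combination then gives $\|z\|_\infty \le \gamma \Delta/\delta$, so
$$M \subseteq B \cdot [-\gamma\Delta/\delta,\,\gamma\Delta/\delta]^m.$$

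Next, set $W = \lceil 1 + \gamma\Delta/\delta \rceil$; then the right-hand box is contained in the half-open cube $[-W, W)^m$, which is the disjoint union of the $(2W)^m$ unit half-open cubes $a + [0,1)^m$ for $a \in \{-W, \dots, W-1\}^m$. For each such $a$, the image $B(a + [0,1)^m) = Ba + B \cdot [0,1)^m$ is an integer translate of the fundamental parallelepiped of the lattice $B\ZZ^m$, and standard lattice theory gives $|\ZZ^m \cap B[0,1)^m| = [\ZZ^m : B\ZZ^m] = \delta$ (one representative per coset). Since $B$ is an $m \times m$ submatrix of $A$ with $\rank(A) = m$, we have $\delta \le \Delta$, and summing over the $(2W)^m$ cubes yields
$$|M \cap \ZZ^m| \le (2W)^m \cdot \delta \le 2^m \cdot \lceil 1 + \gamma\Delta/\delta \rceil^m \cdot \Delta.$$

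For the enumeration algorithm, I would precompute, once, a list $p_1, \dots, p_\delta$ of integer representatives of $\ZZ^m / B\ZZ^m$ lying in $B[0,1)^m$, using the Smith or Hermite normal form of $B$; this preprocessing is polynomial in $m$ and $\log \delta$ and is dominated by the main loop. Then for each $a \in \{-W, \dots, W-1\}^m$ the integer points of $B(a + [0,1)^m)$ are exactly $Ba + p_1, \dots, Ba + p_\delta$, so they can be emitted in $O(m^2)$ arithmetic each (or $O(m)$ amortized if $a$ is traversed in Gray-code order so that consecutive $Ba$'s differ by a single column of $B$). The total number of emitted points is $(2W)^m \cdot \delta = O(2^m \cdot D)$, giving the claimed $O(m^2 \cdot 2^m \cdot D)$ bound. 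The only genuinely non-elementary ingredient is the lattice identity $|B[0,1)^m \cap \ZZ^m| = \delta$; the main technical care is to choose $W$ so that the box containment is clean and the stated constants fall out without losing a factor in $m$.
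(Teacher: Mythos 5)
Your argument is correct and essentially the paper's own: bounding $\|B^{-1}Ax\|_\infty \le \gamma\Delta/\delta$ via Cramer's rule is exactly the paper's factorization $A = B\bigl(I \; U\bigr)$ with $\Delta_k(U) \le \Delta/\delta$, and both proofs then reduce the count to the standard fact that a parallelepiped spanned by an integer matrix contains $|\det|$ integer points. The only cosmetic differences are that you tile $B\cdot[-W,W)^m$ by $(2W)^m$ translates of the fundamental cell of $B$ (the paper covers it by $2^m$ scaled orthant parallelepipeds) and you enumerate via precomputed coset representatives from the Smith/Hermite normal form instead of invoking the enumeration routine cited from \cite{FPT18}; both give the same count and the same complexity bound.
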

\begin{proof}
W.l.o.g. we can assume that first $m$ columns of $A$ form the sub-matrix $B$. Consider a decomposition $A = B \bigl(I \; U\bigr)$, where $\bigl(I \; U\bigr)$ is a block-matrix, $I$ is the $m \times m$ identity matrix and the matrix $U$ is determined uniquely from this equality. Clearly, $\Delta(\bigl(I \; U\bigr)) = \frac{\Delta}{\delta}$, so $\Delta_k(U) \leq \frac{\Delta}{\delta}$ for all $k \in \intint m$. Consider the set 
$$
N = \{ y = \lceil 1 + \gamma \cdot \frac{\Delta}{\delta} \rceil B x \colon x \in (-1,1)^m\}.
$$
Let us show that $M \subseteq N$. Definitely, if $y = A x$ for $\|x\|_1 \leq \gamma$, then $y = B \bigl(I \; U\bigr) x = B t$, for some $t \in [-\gamma,\gamma]^m \cdot \frac{\Delta}{\delta}$. Finally, $\frac{1}{\lceil 1 + \gamma\cdot \frac{\Delta}{\delta} \rceil} t \in (-1,1)^m$ and $y \in N$. 

To estimate the value $|N \cap \ZZ^m|$ we just note that $N$ can be covered by $2^m$ parallelepipeds of the form $\{ y = Q x \colon x \in [0,1)^m\}$, where $Q \in \ZZ^{m \times m}$ and $|\det Q| = \lceil 1 + \gamma\cdot \frac{\Delta}{\delta} \rceil^m \cdot \Delta$. It is well known that the number of integer points in such parallelepipeds is equal to $|\det Q|$, see for example \cite{SEB99} or \cite[Section~16.4]{SCHR98}. Hence, $|M \cap \ZZ^m| \leq |N \cap \ZZ^m| \leq 2^m \cdot \lceil 1 + \gamma\cdot \frac{\Delta}{\delta} \rceil^m \cdot \Delta$. Points inside of the parallelipiped can be enumerated by an algorithm with arithmetical complexity 
$$
O(m \cdot \min\{\log (|\det Q|), m\} \cdot |\det Q|),
$$ see for example \cite{FPT18}. Applying the last formula, we obtain the desired complexity bound to enumerate all integer points inside $N$.

\end{proof}

\begin{corollary}\label{DP_width_cor}
Let $A \in \ZZ^{m \times n}$, $\gamma \in \RR_{>0}$, $\Delta = \Delta(A)$ and 
$$
M = \{y = A x \colon x \in \RR^n,\, \|x\|_1 \leq \gamma \},
$$
$$
\text{then}\quad |M \cap \ZZ^m| \leq 2^m \cdot \lceil 1 + \gamma \rceil^m \cdot \Delta.
$$

Points of $M \cap \ZZ^m$ can be enumerated by an algorithm with the arithmetical complexity bound:
$$
O(\log m)^{m^2} \cdot \Delta \cdot \gamma^m.
$$
\end{corollary}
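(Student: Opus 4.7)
The corollary is a direct specialization of Lemma \ref{DP_width_lm}, obtained by an optimal choice of the auxiliary sub-matrix $B$. The first thing to dispose of is the low-rank case: if $\rank(A) = r < m$, then $M$ is contained in a rational $r$-dimensional subspace of $\RR^m$, and injectively projects onto $A_I \cdot \{x : \|x\|_1 \leq \gamma\}$ for any choice of rows $I \subseteq \intint m$ with $|I| = r$ that makes $A_I$ of full row rank (and with $\Delta_r(A_I) = \Delta$). Applying the lemma to $A_I$ then yields a bound of $2^r \lceil 1+\gamma\rceil^r \Delta$, which is majorised by the claimed $m$-dimensional bound. So I may assume $\rank(A) = m$.

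With $\rank(A) = m$ in hand, the plan is to choose $B$ to be an $m \times m$ sub-matrix of $A$ attaining $|\det B| = \Delta_m(A) = \Delta$; such a $B$ exists by the very definition of $\Delta(A)$. The crucial simplification is that $\delta := |\det B| = \Delta$, so the ratio $\Delta/\delta$ appearing in Lemma \ref{DP_width_lm} collapses to $1$. Substituting this directly into the lemma's cardinality bound gives
$$
|M \cap \ZZ^m| \leq 2^m \cdot \lceil 1 + \gamma \rceil^m \cdot \Delta,
$$
which is exactly the first assertion of the corollary.

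For the enumeration part, the same substitution transforms the auxiliary quantity $D = \Delta \cdot (\gamma \cdot \Delta/\delta)^m$ into $\Delta \cdot \gamma^m$, so that Lemma \ref{DP_width_lm} yields an enumeration cost of $O(m^2 \cdot 2^m \cdot \Delta \cdot \gamma^m)$. To reach the precise form stated in the corollary, the $m^2 \cdot 2^m$ prefactor should be absorbed into $O(\log m)^{m^2}$: the inequality $m^2 \cdot 2^m \leq O(\log m)^{m^2}$ is routine for sufficiently large $m$ where $(\log m)^{m^2}$ dominates, and for small $m$ both expressions are absolute constants. The only mildly delicate point in the whole argument is this reconciliation of the $m$-dependent prefactors; since the corollary is ultimately invoked for fixed $m$, the exact form of that prefactor is inessential for the downstream FPTAS.
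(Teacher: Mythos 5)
Your treatment of the cardinality bound is correct and coincides with the paper's: take $B$ with $|\det B| = \Delta$ (this is a pure existence statement, so the fact that such a $B$ is hard to find does not matter), note $\Delta/\delta = 1$, and read off $2^m \lceil 1+\gamma\rceil^m \Delta$ from Lemma~\ref{DP_width_lm}; your more careful handling of the case $\rank(A) < m$ is fine and only makes explicit what the paper dismisses with ``w.l.o.g.''.

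The enumeration half, however, has a genuine gap. Lemma~\ref{DP_width_lm}'s enumeration procedure needs the sub-matrix $B$ \emph{explicitly in hand}: the points are listed by covering the set $N$, which is defined through $B$, so an algorithm realizing the bound must first compute a suitable $B$. You cannot simply ``choose $B$ attaining $|\det B| = \Delta$'' inside an algorithm, because finding (or even closely approximating) a maximum-determinant $m \times m$ sub-matrix is computationally hard; this is precisely the obstacle the paper's proof addresses. The paper invokes the polynomial-time algorithm of \cite{SUBDET_APPROX}, which outputs $\hat B$ with $\Delta \leq {O(\log m)}^{m} \cdot |\det \hat B|$, and then applies Lemma~\ref{DP_width_lm} with $\delta = |\det \hat B|$, so that $D = \Delta \cdot \left(\gamma \cdot \frac{\Delta}{\delta}\right)^m \leq \Delta \cdot \gamma^m \cdot {O(\log m)}^{m^2}$. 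Thus the factor ${O(\log m)}^{m^2}$ in the corollary is the approximation ratio of the sub-determinant algorithm raised to the $m$-th power, not (as you suggest) a generous absorption of the prefactor $m^2 \cdot 2^m$; your claimed enumeration cost $O(m^2 \cdot 2^m \cdot \Delta \cdot \gamma^m)$ is stronger than what is known, and your argument for it rests on an algorithmically unavailable choice of $B$. To repair the proof, replace the exact maximizer by the approximate one from \cite{SUBDET_APPROX} and redo the substitution into $D$ as above.
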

\begin{proof}
W.l.o.g. we can assume that $\rank(A) = m$. Let us choose $B \in \ZZ^{m \times m}$, such that $|\det B| = \Delta$, then the desired $|M \cap \ZZ^m|$-bound follows from the previous Lemma \ref{DP_width_lm}. Due to \cite{SUBDET_APPROX}, we can compute a matrix $\hat B \in \ZZ^{m \times m}$ such that $\Delta = {O(\log m)}^m \cdot \delta$, where $\delta = |\det \hat B|$, by a polynomial time algorithm. Finally, we take a complexity bound of the previous Lemma \ref{DP_width_lm} with $D = \Delta \cdot \gamma^m \cdot {O(\log m)}^{m^2}$.
\end{proof}




We note that in the current section we need only first parts of these Lemma \ref{DP_width_lm} and Corollary \ref{DP_width_cor} that only estimate number of points nor enumerate them.

Assume that the goal function of the \ref{main_prob} is bounded by a constant $C$. Then, for any $c_0 \in \intint C$ and $k \in \intint n$ we denote by $DP(k,c_0)$ the set of all possible points $y \in \ZZ^m_+$ that satisfy to the system
\begin{equation*}\label{DP_relations}
    \begin{cases}
    c^\top_{\intint k} x = c_0\\
    y = A_{\intint k} x \\
    A_{\intint k} x \leq b \\
    0 \leq x \leq u_{\intint k}\\
    x \in \ZZ^k.
    \end{cases}
\end{equation*}

In particular, the optimal value of the \ref{main_prob} can be computed by the formula 
$$
c^\top x^{opt} = \max\{c_0 \in [1,C] \cap \ZZ \colon DP(n,c_0) \not= \emptyset\}.
$$

The set $DP(k,c_0)$ can be recursively computed using the following algorithm:
\begin{algorithm}[H]
\caption{An algorithm to compute $DP(k,c_0)$}
\begin{algorithmic}[1]
\FORALL{$z \in [0,\gamma] \cap [0,u_k] \cap \ZZ$}
    \FORALL{$y \in DP(k-1,c_0-z c_k)$}
        \IF{$y + A_k z \leq b$}
            \STATE {\bf add} $y + A_k z$ {\bf into} $DP(k,c_0)$
        \ENDIF
    \ENDFOR
\ENDFOR 
\end{algorithmic}
\end{algorithm}

By Corollary \ref{DP_width_cor}, we have $|DP(k,c_0)| \leq 2^m \cdot \lceil 1 + \gamma \rceil^m \cdot \Delta$. Consequently, to compute $DP(k,c_0)$ we need at most $O(m \cdot (2\gamma)^{m+1} \cdot \Delta)$ arithmetic operations. The total complexity bound is given by the following trivial lemma.
\begin{lemma}\label{DP_complexity_lm}
The sets $DP(k,c_0)$ for $c_0 \in \intint C$ and $k \in \intint n$ can be computed by an algorithm with the arithmetical complexity
$$
O(n \cdot C \cdot m \cdot (2\gamma)^{m+1} \cdot \Delta).
$$
\end{lemma}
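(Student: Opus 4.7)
The plan is to read off the bound directly from the pseudocode of the preceding Algorithm 2, treating the lemma as a bookkeeping exercise. First I would bound the size of one cell $DP(k,c_0)$. Every vector in $DP(k,c_0)$ has the form $A_{\intint k}\,x$ for some $x \in \ZZ^k_{\ge 0}$ with $\|x\|_1 \le \gamma$, so applying Corollary \ref{DP_width_cor} to the matrix $A_{\intint k}$ (extended by zero columns if needed, which does not change $\Delta$) gives
$$
|DP(k,c_0)| \le 2^m \cdot \lceil 1+\gamma \rceil^m \cdot \Delta.
$$
This bound is uniform in $c_0$ and $k$, which is the only substantive input to the estimate.

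Next I would account for the cost of one invocation of Algorithm 2. The outer loop runs over $z \in [0,\gamma]\cap\ZZ$, which is at most $\lfloor\gamma\rfloor+1 = O(\gamma)$ iterations. The inner loop runs over $DP(k-1,c_0-z c_k)$, which has at most $2^m\lceil 1+\gamma\rceil^m\Delta$ elements by the previous paragraph. Each iteration performs one vector addition $y + A_k z$, one componentwise comparison with $b$, and one insertion into $DP(k,c_0)$; the first two clearly cost $O(m)$ arithmetic operations each, while the insertion and duplicate test can be done in $O(m)$ expected time using a hash table keyed on the $m$-component vectors, exactly as discussed in Remark \ref{hash_table_rm}. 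Multiplying the three factors yields
$$
O(\gamma)\cdot O\bigl(2^m\,(1+\gamma)^m\,\Delta\bigr)\cdot O(m) \;=\; O\bigl(m\cdot (2\gamma)^{m+1}\cdot \Delta\bigr)
$$
arithmetic operations per set $DP(k,c_0)$.

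Finally I would sum over the $n\cdot C$ pairs $(k,c_0)$ with $k\in\intint n$ and $c_0\in\intint C$ to obtain the total complexity
$$
O\bigl(n\cdot C\cdot m\cdot (2\gamma)^{m+1}\cdot \Delta\bigr),
$$
as claimed. There is no genuine obstacle here; the only thing that makes the proof not completely mechanical is the need to cite Corollary \ref{DP_width_cor} uniformly across all cells and to invoke the hash-table model in order to keep the per-element cost at $O(m)$ rather than depending on $|DP(k,c_0)|$.
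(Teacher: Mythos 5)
Your proposal is correct and follows essentially the same route as the paper, which simply bounds $|DP(k-1,c_0-z c_k)|$ by Corollary \ref{DP_width_cor}, multiplies by the $O(\gamma)$ choices of $z$ and the $O(m)$ work per point to get $O(m \cdot (2\gamma)^{m+1} \cdot \Delta)$ per cell, and then sums over the $n \cdot C$ pairs $(k,c_0)$. Your explicit mention of the hash-table lookup cost only makes explicit what the paper leaves implicit (cf.\ Remark \ref{hash_table_rm}).
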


\subsection{Putting Things Together}

Our algorithm is based on the scheme proposed in the seminal work \cite{IK} of O.~Ibarra and C.~Kim. Our choice of an algorithmic base is justified by the fact that it is relatively easy to generalize the approach of \cite{IK} to the $m$-dimensional case. On the other hand, more sophisticated schemes described in the papers \cite{KP1,KP2,LAW,MO} give constant improvements in the exponent or improvements in the memory usage only.

First of all, let us define two parameters $\alpha, \beta \in \QQ_{>0}$, whose purpose will be explained later. Let $\Coast^{gr}$ be the value of the greedy algorithm applied to the original $\Delta$-modular \ref{main_prob}, $x^{opt}$ be its integer optimal point and $\Coast^{opt} = c^\top x^{opt}$. As it was proposed in \cite{IK}, we split items into heavy and light: $H = \{i \colon c_i > \alpha \, \Coast^{gr} \}$ and $L = \{i \colon c_i \leq \alpha \, \Coast^{gr} \}$.

It can be shown that $\|x_H\|_1 \leq \frac{m+1}{\alpha}$ for any feasible solution $x$ of \ref{main_prob}. Definitely, if $\|x_H\|_1 > \frac{m+1}{\alpha}$, then $\Coast^{opt} = c^\top x^{opt} \geq c^\top x \geq c^\top_H x_H > \alpha  \Coast^{gr} \frac{m+1}{\alpha} = (m+1)  \Coast^{gr} \geq \Coast^{opt}$.

Let $s = \beta \Coast^{gr}$, we put $w = \lfloor \frac{c}{s} \rfloor$. Consider a new $\Delta$-modular $m$-BKP that consists only from heavy items of the original problem with the scaled costs $w$. 
\begin{gather}
w_H^\top x \to \max\notag\\
\begin{cases}
A_H x \leq b\\
0 \leq x \leq u_H\\
x \in \ZZ^{|H|}.
\end{cases}\tag{HProb}\label{H_prob}
\end{gather}

It follows that $\|x\|_1 \leq \frac{m+1}{\alpha}$ for any feasible solution of \eqref{H_prob}. Additionally, we have $w_H^\top x \leq \frac{1}{s} c_H^\top x \leq \frac{m+1}{s} \Coast^{gr} \leq \frac{m+1}{\beta}$, for any $x$ being feasible solution of \eqref{H_prob}. Hence, we can apply Lemma \ref{DP_complexity_lm} to construct the sets $DP_H(k,c_0)$ for $k \in \intint n$ and $c_0 \in \intint \lceil \frac{m+1}{\beta} \rceil$. Due to Lemma \ref{DP_complexity_lm}, the arithmetical complexity of this computation is bounded by 
\begin{equation}\label{applied_DP_complexity}
    O(n \cdot \frac{m (m+1)}{\beta} \cdot (2\frac{m+1}{\alpha})^{m+1} \cdot \Delta ).
\end{equation}

To proceed further, we need to define a new notation $Pr(I, t)$. For a set of indexes $I \subseteq \intint n$ and for a vector $t \in \ZZ_+^m$, we denote by $Pr(I, t)$ the optimal value of the sub-problem, induced by variables with indexes in $I$ and by the right hand side vector $t$. Or by other words, $Pr(I, t)$ is the optimal value of the problem
\begin{gather*}
    c_I^\top x \to \max\\
    \begin{cases}
            A_I x \leq t\\
            0 \leq x \leq u_I\\
            x \in \ZZ^{|I|}.
    \end{cases}
\end{gather*}

After $DP_H(n,c_0)$ being computed we can construct resulting approximate solution, which will be denoted as $\Coast^{apr}$, by the following algorithm.
\begin{algorithm}[H]

\caption{An FPTAS for \ref{main_prob}}

\begin{algorithmic}[1]
\FORALL{$c_0 \in \intint \lceil \frac{m+1}{\beta} \rceil$}
    \FORALL{$y \in DP_H(n,c_0)$}
        \STATE compute an approximate solution $q$ of the problem $Pr(L,b-y)$
        \begin{gather*}
            c_L^\top x \to \max \notag\\
            \begin{cases}
                    A_L x \leq b - y \\
                    0 \leq x \leq u_L \\
                    x \in \ZZ^{|L|}
            \end{cases} \label{L_prob}
        \end{gather*}
        using the greedy algorithm.
        \STATE $\Coast^{apr} := \max\{\Coast^{apr}, s\, c_0 + q\}$.
    \ENDFOR
\ENDFOR 
\end{algorithmic}
\end{algorithm}
Due to Corollary \ref{DP_width_cor}, the arithmetical complexity of the algorithm can be estimated as
\begin{equation}\label{join_complexity}
    O( T_{LP}\cdot \frac{m}{\beta} \cdot (2 \frac{m+1}{\alpha})^m \cdot \Delta ).
\end{equation}

Clearly, $x^{opt} = x_H^{opt} + x_L^{opt}$. We denote $C^{opt}_H = c^\top_H x^{opt}_H$, $C^{opt}_L = c^\top_L x^{opt}_L$ and $c_0^* = w^\top_H x^{opt}_H$. The value of $c_0^*$ will arise in some evaluation of Line 1 of the proposed algorithm. Or by other words, we will have $c_0 = c_0^*$ in some evaluation of Line 1. Let $y*$ be the value of $y \in DP_H(n,c_0^*)$ such that $s\,c^*_0 + Pr(L,b-y^*)$ is maximized and $q^*$ be the approximate value of $Pr(L,b-y^*)$, given by the greedy algorithm in Line 3. Clearly, $\Coast^{apr} \geq s\,c_0^* + q^*$, so our goal is to chose parameters $\alpha,\beta$ in such a way that the inequality $s\,c_0^* + q^* \geq (1-\varepsilon) \Coast^{opt}$ will be satisfied.

Firstly, we estimate the difference $\Coast^{opt}_H - s\, c_0^*$: 
\begin{multline*}
    \Coast^{opt}_H - s\, c_0^* \leq (c^\top_H - s\, w^\top_H) x^{opt}_H \leq s\,\{c^\top_H/s\} x^{opt}_H \leq \\
    \leq s \frac{m+1}{\alpha} = \frac{(m+1) \beta \Coast_{gr}}{\alpha } \leq \frac{(m+1) \beta \Coast_{opt}}{\alpha}
\end{multline*}

To estimate the difference $\Coast^{opt}_L - q^*$ we need to note that $Pr(L,b-y^*) \geq Pr(L, b - A_H x^{opt}_H) = \Coast^{opt}_L$. It follows from optimality of $y^*$ with respect to the developed dynamic program. Next, since $c_i \leq \alpha \Coast^{gr}$ for $i \in L$, due to the inequality \eqref{greedy_ineq}, we have 
\begin{equation*}
    q^* \geq Pr(L,b-y^*) - m\, \alpha \Coast^{gr} \geq Pr(L,b-y^*) - (m+1)\, \alpha \Coast^{opt}.
\end{equation*} 
Finally, we have 
\begin{equation*}
    \Coast^{opt}_L - q^* \leq Pr(L,b-y^*) - q^* \leq (m+1)\, \alpha \Coast^{opt}.
\end{equation*}

Putting all inequalities together, we have 
\begin{multline*}
    \Coast^{opt} - \Coast^{apr} \leq (\Coast^{opt}_H - s\,c_0^*) + (\Coast^{opt}_L - q^*) \leq \\
    \leq (m+1)(\alpha + \frac{\beta}{\alpha}) \Coast^{opt}
\end{multline*} and 
\begin{equation}\label{final_error}
    \Coast^{apr} \geq (1-(m+1)(\alpha + \frac{\beta}{\alpha}))\Coast^{opt}.
\end{equation}

The total arithmetical complexity can be estimated as
\begin{equation}\label{final_complexity}
    O(T_{LP} \cdot \frac{1}{\beta} \cdot (2 m)^{m+3} \cdot \left(\frac{1}{\alpha}\right)^{m+1} \cdot \Delta).
\end{equation}

Finally, after the substitution $\beta = \alpha^2$ and $\alpha = \frac{\varepsilon}{2(m+1)}$ to \eqref{final_error} and \eqref{final_complexity}, we have
$$
\Coast^{apr} \geq (1 - \varepsilon) \Coast^{opt}
$$ and a complexity bound
$$
O(T_{LP} \cdot (1/\varepsilon)^{m+3} \cdot (2m)^{2m + 6} \cdot \Delta)
$$ that finishes the proof.








\section{Proof of Theorem \ref{Delta_ILP_th}}\label{proof_Delta_ILP_th}

Let $x^*$ be an optimal vertex solution of the LP relaxation of the $\Delta$-modular \ref{ILP_standard} problem. After a standard change of coordinates $x \to x - \lfloor x^* \rfloor$ the original \ref{ILP_standard} transforms to an equivalent ILP with different lower and upper bounds on variables and a different right-hand side vector $b$. For the sake of simplicity we assume that lower bounds of the new problem are equal to zero. 

Any optimal vertex solution of the LP problem has at most $m$ non-zero coordinates, so we have the following bound on the $l_1$-norm of an optimal ILP solution $z^* - \lfloor x^* \rfloor$ of the new problem:
\begin{equation*}
    \|z^* - \lfloor x^* \rfloor\|_1 \leq \|x^* - z^*\|_1 + \|x^* - \lfloor x^* \rfloor\|_1 \leq H + m.
\end{equation*}

\subsection{First Complexity Bound}

Consider a weighted digraph $G = (V,E)$, whose vertices are triplets $(k,h,l)$, for $k \in \intint n$, $l \in \intint[0]{(H + m)}$ and $h \in \{A x \colon \|x\|_1 \leq l \} \cap \ZZ^m$. Using Corollary \ref{DP_width_cor}, we bound the number of vertices $|V|$ by $O(n \cdot 2^m \cdot (H + m)^{m+1} \cdot \Delta )$. By definition, any vertex $(k, h, l)$ has an in-degree equal to $\min\{u_k, l\}+1$. More precisely, for any $j \in \intint[0]\min\{u_k, l\}$ there is an arc from $(k-1, h - A_k j, l - j)$ to $(k, h, l)$, this arc is weighted by $c_k j$. Note that vertex $(k-1, h - A_k j, l - j)$ exists only if $j \leq l$. Additionally, we add to $G$ a starting vertex $s$, which is connected with all vertices of the first level $(1,*,*)$, weights of this arcs correspond to solutions of $1$-dimensional sub-problems. Clearly, the number of arcs can be estimated by 
$$
|E| = O(|V| \cdot (H+m)) = O(n \cdot 2^m \cdot (H + m)^{m+2} \cdot \Delta ).
$$

The \ref{ILP_standard} problem is equivalent to searching of the longest path starting from the vertex $s$ and ending at the vertex $(n, b, H + m)$ in $G$. Since the graph $G$ is acyclic, the longest path problem can be solved by an algorithm with the complexity bound $O(|V| + |E|) = O(n \cdot 2^m \cdot (H + m)^{m+2} \cdot \Delta )$.

We note that during the longest path problem solving, the graph $G$ must be evaluated on the fly. In other words, the vertices and arcs of $G$ are not known in advance, and we build them online. To make constant-time access to vertices we can use a hash-table data structure with constant-time insert and search operations (see Remark \ref{hash_table_rm}).


Finally, using the binarization trick, described in the work \cite{STEINITZILP}, we can significantly decrease the number of arcs in $G$. The idea of the trick is that any integer $j \in [0,\min\{u_k, l\}]$ can be uniquely represented using at most $O(\log^2 (\min\{u_k, l\})) = O(\log^2 (H+m))$ bits. More precisely, for any interval $[0,\min\{u_k, l\}]$ there exist at most $O(\log^2 (H+m))$ integers $s(k,i)$ such that any integer $j \in [0,\min\{u_k, l\}]$ can be uniquely represented as 
\begin{gather*}
    j = \sum_i s(k,i) x_i,\quad\text{ where $x_i \in \{0,1\}$, and }\\
    \sum_i s(k,i) x_i \in [0, \min\{u_k,l\}], \quad \text{for any $x_i \in \{0,1\}$}.
\end{gather*}
Using this idea, we replace the part of the graph $G$ connecting vertices of the levels $(k-1,*,*)$ and $(k,*,*)$ by an auxiliary graph, whose vertices correspond to the triplets $(i,h,l)$, where $i \in \{0,1, \dots, O(\log^2 (H+m))\}$, and any triplet $(i,h,l)$ has in-degree two. More precisely, the vertex $(i, h, l)$ is connected with exactly two vertices: $(i-1, h, l)$ and $(i-1, h - s(k,i) A_k, l - s(k,i))$. The resulting graph will have at most $O(\log^2 (H + m) |V|)$ vertices and arcs, where $|V|$ corresponds to the original graph. Total arithmetical complexity can be estimated as 
$$
O(n \cdot 2^{O(m)} \cdot (H + m)^{m+1} \cdot \log^2 H \cdot \Delta ).
$$

\subsection{Second Complexity Bound}

Consider a weighted digraph $G = (V, E)$, whose vertices are pairs $(k,h)$, for $k \in \intint n$ and $h \in M := \{A x \colon \|x\|_1 \leq H + m \} \cap \ZZ^m$. The edges of $G$ have the same structure as in the graph from the previous subsection. More precisely, for any $j \in \intint[0]{u_k}$ we put an arc from $(k-1, h - A_k j)$ to $(k, h)$, if such vertices exist in $V$, the arc is weighted by $c_k j$. 

We compute all vertices of $G$ directly, using Corollary \ref{DP_width_cor}. Arithmetical complexity of this step is bounded by $n \cdot {O(\log m)}^{m^2} \cdot (H + m)^m \cdot \Delta$. Due to Corollary \ref{DP_width_cor}, $|V| = n \cdot |M| = O(n \cdot 2^m \cdot (H + m)^{m} \cdot \Delta)$ and $|E| = O(n \cdot 2^m \cdot (H + m)^{m+1} \cdot \Delta)$, since an in-degree of any vertex in $G$ is bounded by $H + m +1$. 



Let us fix some vertex-level $(k, *)$ of $G$ for some $k \in \intint n$, and consider an auxiliary graph $F_k$, whose vertices are exactly elements $h \in M = \{A x \colon \|x\|_1 \leq H + m\} \cap \ZZ^m$. For two vertices $h_1, h_2$ of $F_k$, we put an arc from $h_1$ to $h_2$ if $h_2 - h_1 = A_k$. Since the graph $F_k$ is acyclic and since "in" and "out" degrees of any vertex in $F_k$ are at most one, the graph $F_k$ is a disjoint union of paths. This decomposition can be computed by an algorithm with complexity $O(|V(F_k)|) = O(|M|) = O(2^m \cdot (H + m)^m \cdot \Delta)$. Let $(h_1, h_2, \dots, h_t)$ be some path of the decomposition, and $longest(k,h)$ be the value of the longest path in $G$ starting at $s$ and ending at $(k,h)$. Clearly, for any $i \in \intint t$, the value of $longest(k,h_i)$ can be computed by the formula
\begin{equation}\label{longest}
    longest(k,h_i) = \max\limits_{j \in \min\{u_k, i-1\}} longest(k-1,h_{i-j}) + c_k j.
\end{equation}

Consider a queue $Q$ with operations: $Enque(Q,x)$ that puts an element $x$ into the tail of $Q$, $Decue(Q)$ that removes an element $x$ from the head of $Q$, $GetMax(Q)$ that returns maximum of elements of $Q$. It is known fact that queue can be implemented such that all given operations will have amortized complexity $O(1)$. Now, we compute $longest(k,h_i)$, for $h_i \in (h_1, h_2, \dots, h_t)$ using the following algorithm:
\begin{algorithm}[H]
\caption{Compute longest path with respect to $(h_1, h_2, \dots, h_t)$}
\begin{algorithmic}[1]
\STATE Create an empty queue $Q$;
\STATE $w := \min\{u_k, t\}$;
\FOR{$j := 0$ {\bf to} $w$}
    \STATE $Enque(Q, longest(k-1,h_{t-j}) + c_k j)$;
\ENDFOR
\FOR{$i := t$ {\bf down to} $1$}
    \STATE $longest(k,h_i) := GetMax(Q) - c_k (t-i)$;
    \STATE $Decue(Q)$;
    \IF{$i \geq w+1$}
        \STATE $Enque(Q, longest(k-1,h_{i-w-1}) + c_k (t - i+1))$;
    \ENDIF
\ENDFOR
\end{algorithmic}
\end{algorithm}

Correctness of the algorithm follows from the formula \eqref{longest}. The algorithm's complexity is $O(t)$. 

Let us estimate the total arithmetical complexity of the whole procedure. It consists from the following parts:
\begin{enumerate}
    \item Enumerating of points in the set $M$. Due to Corollary \ref{DP_width_cor}, the complexity of this part is $O(\log m)^{m^2} \cdot (H+m)^m \cdot \Delta$; 
    \item Constructing the graphs $F_k$ for each $k \in \intint n$. The number of edges and vertices in $F_k$ can be estimated as $O(|M|)$. Hence, due to Corollary \ref{DP_width_cor}, the complexity of this part can be estimated as $O(n \cdot |M|) = O(n \cdot 2^m \cdot (H+m)^m \cdot \Delta)$.
    \item For each $F_k$, compute a path decomposition of $F_k$. For each path in the decomposition, apply an Algorithm 4. The complexity of this part is clearly the same as in the previous step. 
\end{enumerate}

Therefore, the total complexity bound is roughly
$$
n \cdot O(\log m)^{m^2} \cdot (H + m)^m \cdot \Delta.
$$

\section*{Conclusion}
The paper considers the $m$-dimensional bounded knapsack problem \eqref{main_prob} and the bounded ILP in the standard form \eqref{ILP_standard}. For the problem \ref{main_prob} it gives an FPTAS with the arithmetical complexity bound
$$
O(n \cdot (1/\varepsilon)^{m+3} \cdot \Delta),
$$ where $n$ is the number of variables, $m$ is the number of constraints (we assume here that $m$ is fixed) and $\Delta = \Delta(A)$ is the maximal absolute value of rank-order minors of $A$. For details see Theorem \ref{main_th} and Corollary \ref{main_cor}.

For the problem \ref{ILP_standard} it gives an exact algorithm with the complexity bound
$$
O(n \cdot \Delta^{m+1}).
$$ 

Taking $m = 1$ it gives 
$$
O(n \cdot \Delta^2)
$$ arithmetical complexity bound for the classical bounded knapsack problem. For details see Theorem \ref{Delta_ILP_th} and Corollary \ref{Delta_ILP_cor}.


\begin{thebibliography}{50}

\bibitem{AZ} Alekseev,~V.~V., Zakharova,~D.~V.: Independent sets in the graphs with bounded minors of the extended incidence matrix. Journal of Applied and Industrial Mathematics {\bf 5}, 14--18 (2011) \doi{10.1134/S1990478911010029}






\bibitem{BIMODULAR_STRONG} Artmann,~S., Weismantel,~R., Zenklusen,~R. A strongly polynomial algorithm for bimodular integer linear programming. Proceedings of 49th Annual ACM Symposium on Theory of Computing, pp. 1206--1219 (2017) \doi{10.1145/3055399.3055473}











\bibitem{BOCK14} Bock,~A., Faenza,~Y., Moldenhauer,~C., Vargas,~R., Jacinto,~A.  Solving the stable set problem in terms of the odd cycle packing number. Proceedings of 34th Annual Conference on Foundations of Software Technology and Theoretical Computer Science, Leibniz International Proceedings in Informatics (LIPIcs), vol. 29, 187--198 (2014) \doi{10.4230/LIPIcs.FSTTCS.2014.187}





\bibitem{BestPTAS} Caprara,~A., Kellerer,~H., Pferschy,~U., Pisinger,~D. Approximation algorithms for knapsack problems with cardinality constraints. European Journal of Operational Research {\bf 123}, 333--345 (2000) \doi{10.1016/S0377-2217(99)00261-1}


\bibitem{Chan} Chan,~T. Approximation Schemes for $0-1$ Knapsack. In Proceedings of the 1st Symposium on Simplicity in Algorithms (SOSA), pp. 5:1--5:12 (2018) \doi{10.4230/OASIcs.SOSA.2018.5}


\bibitem{STABLE_SET_GENUS} Conforti,~M., Fiorini,~S., Huynh,~T., Joret,~G., Weltge,~S. The stable set problem in graphs with bounded genus and bounded odd cycle packing number. Proceedings of the 2020 ACM-SIAM Symposium on Discrete Algorithms (SODA), pp. 2896--2915 (2020) \doi{10.1137/1.9781611975994.176}

\bibitem{CORMEN} Cormen,~T.~H., Leiserson,~C.~E., Rivest,~R.~L., Stein,~C. Introduction to Algorithms. 3rd. edition, MIT Press (2009)


\bibitem{FOMIN} Fomin,~F.~V., Panolan,~F., Ramanujan,~M.~S., Saurabh,~S. On the Optimality of Pseudo-polynomial Algorithms for Integer Programming. ESA 2018, pp. 31:1--31:13 (2018) \doi{10.4230/LIPIcs.ESA.2018.31}














\bibitem{STEINITZILP} Eisenbrand,~F., Weismantel,~R. Proximity Results and Faster Algorithms for Integer Programming Using the Steinitz Lemma. ACM Transactions on Algorithms {\bf 16}(1) (2019) \doi{10.1145/3340322}








\bibitem{GRIBM17} Gribanov,~D.~V., Malyshev,~D.~S. The computational complexity of three graph problems for instances with bounded minors of constraint matrices. Discrete Applied Mathematics {\bf 227}, 13--20 (2017) \doi{10.1016/j.dam.2017.04.025}

\bibitem{GRIBM18} Gribanov,~D.~V., Malyshev,~D.~S. The computational complexity of dominating set problems for instances with bounded minors of constraint matrices. Discrete Optimization {\bf 29}, 103--110 (2018) \doi{10.1016/j.disopt.2018.03.002}

\bibitem{FPT18} Gribanov,~D.~V., Malyshev,~D.~S., Pardalos,~P.~M., Veselov,~S.~I. FPT-algorithms for some problems related to integer programming. Journal of Combinatorial Optimization {\bf 35}(4), 1128--1146 (2018) \doi{10.1007/s10878-018-0264-z}



\bibitem{MinorsGraphs} Grossman,~J.~V., Kulkarni,~D.~M., Schochetman,~I.~E. On the minors of an incidence matrix and its Smith normal form. Linear Algebra Appl {\bf 218}, 213 -- 224 (1995) \doi{10.1016/0024-3795(93)00173-W}


\bibitem{ParamWKP} Halman,~N., Holzhauser,~M., Krumke,~S. An FPTAS for the knapsack problem with parametric weights. Operations Research Letters {\bf 46}(5), 487--491 (2018) \doi{10.1016/j.orl.2018.07.005}

\bibitem{KNOWN_ODD_CYCLES} Har-Peled,~S., Rahul,~S. Two (Known) Results About Graphs with No Short Odd Cycles (2018) \url{https://arxiv.org/abs/1810.01832}



\bibitem{ParamKP} Holzhauser,~M., Krumke,~S. An FPTAS for the parametric knapsack problem. Information Processing Letters {\bf 126}, 43--47 (2017) \doi{10.1016/j.ipl.2017.06.006}



\bibitem{IK} Ibarra,~O.~H., Kim,~C.~E. Fast approximation algorithms for the knapsack and sum of subset problem. Journal of the ACM {\bf 22}, 463--468 (1975) \doi{10.1287/moor.3.3.197}

\bibitem{FastUKP} Jansen,~K., Kraft,~S. A faster fptas for the unbounded knapsack problem. European Journal of Combinatorics {\bf 68}, 148--174 (2018) \doi{10.1016/j.ejc.2017.07.016}

\bibitem{CONVILP} Jansen,~K., Rohwedder,~L. On Integer Programming, Discrepancy, and Convolution (2018) \url{https://arxiv.org/abs/1803.04744}

\bibitem{CeJin} Jin,~Ce. An Improved FPTAS for $0-1$ Knapsack. 46th International Colloquium on Automata, Languages, and Programming (ICALP 2019), pp. 76:1--76:14 (2019) \doi{10.4230/LIPIcs.ICALP.2019.76}


\bibitem{KP1} Kellerer,~H., Pferschy,~U. A new fully polynomial time approximation scheme for the knapsack problem. Journal of Combinatorial Optimization {\bf 3}, 59--71 (1999) \doi{10.1023/A:1009813105532}

\bibitem{KP2} Kellerer,~H., Pferschy,~U. Improved dynamic programming in connection with an FPTAS for the knapsack problem. Journal of Combinatorial Optimization {\bf 8}, 5--11 (2004) \doi{10.1023/B:JOCO.0000021934.29833.6b}

\bibitem{TheMKP} Kellerer,~H., Pferschy,~U., Pisinger,~D. Knapsack Problems. Springer, Berlin, Heidelberg (2004) \doi{10.1007/978-3-540-24777-7}



\bibitem{TightComplexityLB} Knop,~D., Pilipczuk,~M., Wrochna,~M. Tight complexity lower bounds for integer linear programming with few constraints. ACM Transactions on Computation Theory (TOCT) {\bf 12}(3), 1--19 (2020) \doi{10.4230/LIPIcs.STACS.2019.44}



\bibitem{NoFPTAS} Korte,~B., Schrader,~R. On the existence of fast approximation schemes. Nonlinear Programming {\bf 4}, 415--437 (1981) \doi{10.1016/B978-0-12-468662-5.50020-3}


\bibitem{NoEPTAS} Kulik,~A., Shachnai,~H. There is no EPTAS for two-dimensional knapsack. Information Processing Letters {\bf 110}(16), 707--710 (2010) \doi{10.1016/j.ipl.2010.05.031}

\bibitem{LAW} Lawler,~B.~L. Fast approximation algorithms for knapsack problems. Mathematics of Operations Research {\bf 4}, 339--356 (1979) \doi{10.1287/moor.4.4.339}




\bibitem{ProximityUseSparsity} Lee,~J., Paat,~J., Stallknecht,~I., Xu,~L. Improving proximity bounds using sparsity. Combinatorial Optimization. ISCO 2020. Lecture Notes in Computer Science, vol. 12176 (2020) \doi{10.1007/978-3-030-53262-8\_10}

\bibitem{CarMKP} Li,~W., Lee,~J. A Faster FPTAS for Knapsack Problem With Cardinality Constraint (2020) \url{https://arxiv.org/abs/1902.00919}



\bibitem{MO} Magazine,~M.~J., Oguz,~O. A fully polynomial approximation algorithm for the $0-1$ knapsack problem. European Journal of Operational Research {\bf 8}, 270--273 (1981) \doi{10.1016/0377-2217(81)90175-2}










\bibitem{MEG} Megiddo,~N., Tamir,~A. Linear time algorithms for some separable quadratic programming problems. Operations Research Letters {\bf 13}, 203--211 (1993) \doi{10.1016/0167-6377(93)90041-E}








\bibitem{PAPA} Papadimitriou,~C.H. On the complexity of integer programming. Journal of the Association for Computing Machinery {\bf 28}, 765--768 (1981) \doi{10.1145/322276.322287}


\bibitem{DP_OPT} Pferschy.~U. Dynamic programming revisited: Improving knapsack algorithms. Computing {\bf 63}(4), 419--430 (1999) \doi{10.1007/s006070050042}

\bibitem{Rhee} Rhee,~D. Faster fully polynomial approximation schemes for knapsack problems. Master’s thesis, Massachusetts Institute of Technology (2015)

\bibitem{SCHR98} Schrijver,~A. Theory of linear and integer programming. John Wiley \& Sons (1998)

\bibitem{SEB99} Seb\"o,~A. An introduction to empty lattice simplices. In: Cornu\'ejols G., Burkard R.E., Woeginger G.J. (eds) Integer Programming and Combinatorial Optimization. IPCO 1999. Lecture Notes in Computer Science, vol. 1610,  pp. 400--414 (1999) \doi{10.1007/3-540-48777-8\_30}







\bibitem{SUBDET_APPROX} Marco,~Di~S., Friedrich,~E., Faenza,~Y., Moldenhauer,~C. On largest volume simplices and sub-determinants. SODA '15: Proceedings of the twenty-sixth annual ACM-SIAM symposium on Discrete algorithms, p. 315--323 (2015) \doi{10.5555/2722129.2722152}














\end{thebibliography}
\end{document}